
\documentclass[12pt]{article}
\usepackage{amsfonts}
\usepackage{amssymb}
\usepackage{amsmath}
\usepackage{amsthm}
\usepackage{xcolor}
\usepackage[unicode,bookmarks,bookmarksopen,bookmarksopenlevel=2,colorlinks,linkcolor=blue,citecolor=green]{hyperref}

\setcounter{MaxMatrixCols}{10}

\newtheorem{theorem}{Theorem}

\newtheorem{definition}{Definition}

\newtheorem{proposition}{Proposition}

\newcommand{\ddx}[1]{\partial_x^{#1}}

\textheight23cm \textwidth16cm \hoffset-1.4cm \topmargin-1.5cm
\input{tcilatex}
\begin{document}

\title{\textbf{On a class of third-order nonlocal Hamiltonian operators}}
\author{M. Casati$^1$, E.V. Ferapontov$^1$, M.V. Pavlov$^{2}$, R.F. Vitolo%
$^{3}$ \\
$^{1}$Department of Mathematical Sciences,\\
Loughborough University,\\
Loughborough, Leicestershire, LE11 3TU, UK\\
\texttt{e.v.ferapontov@lboro.ac.uk}\\
[3mm] $^{2}$Sector of Mathematical Physics,\\
Lebedev Physical Institute of Russian Academy of Sciences,\\
Leninskij Prospekt, 53, Moscow, Russia\\
\texttt{m.v.pavlov@lboro.ac.uk}\\
[3mm] $^{3}$Department of Mathematics and Physics ``E. De Giorgi'',\\
University of Salento, Lecce, Italy\\
and INFN, Section of Lecce\\
\texttt{raffaele.vitolo@unisalento.it} }
\date{}
\maketitle

\begin{abstract}
Based on the theory of Poisson vertex algebras we calculate skew-symmetry
conditions and Jacobi identities for a class of third-order nonlocal
operators of differential-geometric type. Hamiltonian operators within this
class are defined by a Monge metric and a skew-symmetric two-form satisfying
a number of differential-geometric constraints. Complete classification
results in the 2-component and 3-component cases are obtained. \bigskip

\noindent MSC: 37K05, 37K10, 37K20, 37K25.

\bigskip

\noindent Keywords: Nonlocal Hamiltonian Operator, Monge Metric, Dirac
Reduction, Poisson Vertex Algebra.
\end{abstract}

\newpage

\tableofcontents

\allowdisplaybreaks[3]

\section{Introduction and summary of the main results}

\label{sec:intro} Third-order Hamiltonian operators of
differential-geometric type were introduced in \cite{DN2} and thoroughly
investigated in \cite{GP87, GP91, GP97, BP, Doyle}. In the so-called `flat
coordinates' $u=\{u^1, \dots, u^n\}$ these operators take the form
\begin{equation}  \label{eq:1}
A = \partial_x^{}\Big(g^{ij}\partial_x^{} + c^{ij}_k u^k_x\Big)\partial_x^{}
\end{equation}
where the coefficients $g^{ij}(u)$ and $c^{ij}_k(u)$ satisfy a system of
differential constraints coming from the skew-symmetry conditions and the
Jacobi identities. Here $i, j, k\in \{1, \dots, n\}$ where $n$ is the number
of components. Hamiltonian operators of type (\ref{eq:1}) arise in the
theory of equations of associativity of 2-dimensional topological field
theory (WDVV equations \cite{Dub}), see \cite{FN, OM98, KN1, KN2, pv}.
Projective-geometric aspects of operators (\ref{eq:1}) were studied in \cite%
{FPV, FPV1} based on their correspondence to Monge metrics and quadratic
line complexes. This has lead to complete classification results for the
number of components $n\leq 4$. The general theory of Hamiltonian systems of
conservation laws associated with operators (\ref{eq:1}) was developed in
\cite{FPV2}.

In this paper we investigate a nonlocal generalisation of ansatz (\ref{eq:1}%
),
\begin{equation}  \label{eq:32}
A = \partial_x^{}\Big(g^{ij}\partial_x^{} + c^{ij}_k
u^k_x+w^i_ku^k_x\partial_x^{-1}w^j_lu^l_x\Big)\partial_x^{},
\end{equation}
in what follows we will always assume the non-degeneracy condition $\det
g\ne 0$. Operator of type (\ref{eq:32}) appeared in the context of the
Wadati-Konno-Ishikawa (WKI) soliton hierarchy \cite{WKI, MMZ}, see Section %
\ref{sec:ex}. Although operator (\ref{eq:32}) looks analogous to first-order
nonlocal Hamiltonian operators introduced in \cite{Fer95}, the underlying
geometry is quite different.

\begin{theorem}
\label{Jacobi} Operator (\ref{eq:32}) is Hamiltonian if and only if the
following conditions are satisfied:
\begin{equation}
\begin{array}{c}
g^{ij}=g^{ji}, \\
g_{,k}^{ij}=c_{k}^{ij}+c_{k}^{ji}, \\
c_{s}^{ij}g^{sk}+c_{s}^{kj}g^{si}=0, \\
c_{s}^{ij}g^{sk}+c_{s}^{jk}g^{si}+c_{s}^{ki}g^{sj}=0, \\
g^{is}w^k_s+g^{ks}w^i_s=0, \\
g^{ks}w^j_{s,l}+g^{js}_{,l}w^k_s-c^{jk}_sw^s_l+c^{kj}_sw^s_l=0, \\
g^{ks}c^{ij}_{s,l}+c^{kj}_sg^{si}_{,l}+c^{ki}_sc^{sj}_l-c^{ik}_sc^{sj}_l+g^{ks}w^i_sw^j_l=0.%
\end{array}
\label{cond}
\end{equation}
\end{theorem}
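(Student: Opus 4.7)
The approach I would take is through the formalism of Poisson vertex algebras (PVA), in its nonlocal extension. The first step is to encode (\ref{eq:32}) as a formal $\lambda$-bracket
\[
\{u^i{}_\lambda u^j\} \;=\; \lambda\Bigl(g^{ji}\lambda + c^{ji}_k u^k_x + w^j_k u^k_x\,(\lambda+\partial_x)^{-1}\,w^i_l u^l_x\Bigr)\lambda,
\]
treating $(\lambda+\partial_x)^{-1}$ as a formal pseudo-differential symbol. Then $A$ is Hamiltonian if and only if this $\lambda$-bracket satisfies the PVA skew-symmetry $\{a_\lambda b\}=-\{b_{-\lambda-\partial}a\}$ and the Jacobi identity $\{a_\lambda\{b_\mu c\}\}-\{b_\mu\{a_\lambda c\}\}=\{\{a_\lambda b\}_{\lambda+\mu}c\}$ in the sense of nonlocal PVAs.

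Skew-symmetry is checked first. Expanding and matching powers of $\lambda$ in the polynomial part yields the symmetry of $g$ and the Dubrovin--Novikov identity $g^{ij}_{,k}=c^{ij}_k+c^{ji}_k$ (the first two equations of (\ref{cond})). Transporting $(\lambda+\partial_x)^{-1}$ through the substitution $\lambda\mapsto-\lambda-\partial$ and equating the resulting symbols forces the algebraic constraint $g^{is}w^k_s+g^{ks}w^i_s=0$ (the fifth equation), so that the nonlocal tail is formally skew-adjoint.

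The bulk of the proof is the Jacobi identity. I would substitute the $\lambda$-bracket into the triple identity, expand every fractional factor as a formal series in $\partial_x$, and collect the result as a Laurent polynomial in $\lambda,\mu$ whose coefficients depend on $u$, $u_x$, $u_{xx}$ and on symbols of the form $(\lambda+\partial_x)^{-n}$, $(\mu+\partial_x)^{-n}$, $(\lambda+\mu+\partial_x)^{-n}$. Equating each independent coefficient to zero splits into (i) purely polynomial (local--local) pieces, which yield the Monge-type identities---the third, fourth and seventh equations of (\ref{cond}); and (ii) pieces linear in a single fractional factor, which reduce, after invoking the identities already established, to the sixth equation. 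The doubly nonlocal contribution should cancel automatically once the other conditions have been imposed.

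The principal obstacle is the algebraic bookkeeping in the Jacobi step. One has to apply correctly the Leibniz rule for $\{a_\lambda\cdot\}$ through $(\mu+\partial_x)^{-1}$---which introduces shifted nonlocal factors such as $(\lambda+\mu+\partial_x)^{-1}$---and then perform systematic ``integration by parts'' in $\lambda,\mu$ to bring every term into a canonical form. The non-degeneracy $\det g\ne0$ is used throughout to raise and lower indices so that each extracted identity matches the symmetric form listed in (\ref{cond}). The converse direction---that the seven conditions suffice---follows by reversing the calculation, since every cancellation that was forced upon us becomes an obvious consequence of (\ref{cond}).
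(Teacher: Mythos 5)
Your overall strategy coincides with the paper's: encode \eqref{eq:32} as a nonlocal $\lambda$ bracket and impose PVA skew-symmetry and the PVA-Jacobi identity, extracting conditions from the coefficients of the various (Laurent) monomials in $\lambda,\mu$. However, two steps as you describe them would go wrong. First, the symbol is mis-written: the bracket of the generators is not $\lambda\bigl(g^{ji}\lambda+c^{ji}_ku^k_x+\dots\bigr)\lambda$ but $(\lambda+\partial)\bigl(g^{ji}\lambda+c^{ji}_ku^k_x+w^j_ku^k_x(\lambda+\partial)^{-1}w^i_lu^l_x\bigr)\lambda$, because the leftmost $\partial_x$ in $A$ also differentiates the coefficients standing to its right; with your symbol even the skew-symmetry analysis gives the wrong relation between $g^{ij}_{,k}$ and $c^{ij}_k+c^{ji}_k$. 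Second, the fifth condition $g^{is}w^k_s+g^{ks}w^i_s=0$ does \emph{not} come from skew-symmetry: the nonlocal tail $\partial_x\,w^i_ku^k_x\,\partial_x^{-1}\,w^j_lu^l_x\,\partial_x$ is formally skew-adjoint identically, for any $w$ (this is exactly what the paper checks by tracking on which factors the shifted derivative acts). In the paper this condition arises from the Jacobi identity, as the coefficient of $\lambda^{-1}\mu^3$ in the nonlocal part of the expansion; your roadmap assigns only the sixth condition to the nonlocal pieces of Jacobi, so following it literally you would never produce condition five.

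A further point you pass over too quickly is the claim that the ``doubly nonlocal'' contribution cancels automatically. For nonlocal PVAs the three terms of the Jacobi identity live a priori in different double expansions, and one must first check admissibility and then rewrite everything in the space $V_{\lambda,\mu}$, whose homogeneous components have the basis $\lambda^i\mu^{d-i}$ together with $\lambda^{d+j}(\lambda+\mu)^{-j}$, in which coefficients are \emph{uniquely} determined; only then is it legitimate to conclude that the vanishing of the finitely many independent coefficients (including those multiplying $(\lambda+\mu+\partial)^{-1}$) is necessary and sufficient, and that the remaining terms are differential consequences or index permutations of the listed ones. This normalisation step, which the paper carries out by systematically commuting $(\lambda+\mu+\partial)^{-1}$ past polynomial factors, is the actual content of the ``bookkeeping'' and cannot be replaced by an appeal to automatic cancellation. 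Apart from these gaps, the route you propose (including deriving conditions three, four and seven from the local--local part and condition six from the singly nonlocal part) is the same as the paper's, which in addition simplifies the computation by passing to potential coordinates $u^i=v^i_x$ so that the operator becomes first order.
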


Our proof of Theorem \ref{Jacobi} utilises the theory of Poisson vertex
algebras, see Section \ref{sec:proof}. Let us introduce an affine connection
$\nabla$ with Christoffel's symbols $\Gamma^i_{jk}=c^i_{jk}=g_{js}c^{si}_k,
\ c_{ijk}=g_{is}c^s_{jk}$ where $g_{ij}$ is the inverse matrix to $g^{ij}$.
Note that Christoffel's symbols are skew-symmetric in low indices.
Introducing the skew-symmetric 2-form $w_{ij}=g_{is}w^s_j$ we can rewrite
conditions (\ref{cond}) in the equivalent form with low indices,
\begin{equation}
\begin{array}{c}
\nabla g=0, \\
c_{ijk}+c_{ikj}=0, \\
g_{ij, k}+g_{jk, i}+g_{ki, j}=0, \\
w_{ij}+w_{ji}=0, \\
w_{ij, l}-c^s_{ij}w_{sl}=0, \\
c_{nml,k}+c^s_{ml}c_{snk}+w_{ml}w_{nk}=0.%
\end{array}
\label{cond1}
\end{equation}
The last relation implies
\begin{equation*}
R_{ijkl} = w_{il} w_{jk} - w_{ik} w_{jl}
\end{equation*}
where $R_{ijkl}=g_{is}R^s_{jkl}$ is the curvature tensor of the connection $%
\nabla$. Note that a metric $g$ satisfying the equations $g_{ij, k}+g_{jk,
i}+g_{ki, j}=0$ is the Monge metric of a quadratic line complex in $\mathbb{P%
}^n$ \cite{FPV, FPV1}. The existing classification of such metrics in
dimensions 2 and 3 leads to a complete list of 2-component (Theorem \ref{2D}
of Section \ref{sec:2}) and 3-component (Theorem \ref{3D} of Section \ref%
{sec:3}) operators (\ref{eq:32}). An important invariant of a Monge metric
is its singular variety defined by the equation $\det g=0.$ The singular
variety is an algebraic hypersurface of degree $2n-2$ \cite{Dolgachev}. For
local operators (\ref{eq:1}) the singular variety is known to be a double
hypersurface of degree $n-1$ \cite{FPV}. This is no longer the case for
nonlocal operators (\ref{eq:32}): the corresponding singular varieties are
generally irreducible.

In Theorem \ref{TD} of Section \ref{sec:Dirac} we demonstrate that $n$%
-component nonlocal operators (\ref{eq:32}) arise as Dirac reductions of $%
(n+1)$-component local operators (\ref{eq:1}) to hyperplanes in the flat
coordinates.

\medskip

\noindent\textbf{Remark.} The first three conditions (\ref{cond}) imply that
the coefficients $c^{ij}_k$ can be expressed in terms of the metric by the
formula \cite{FPV}
\begin{equation*}
c^{ij}_k=\frac{1}{3}g^{qi}g^{pj}(g_{pk, q}-g_{pq, k}).
\end{equation*}
Setting $k=j,\ l=i$ in the relation $R_{ijkl} = w_{il} w_{jk} -
w_{ik} w_{jl}$ we obtain $R_{ijji}=w_{ij}^2$, which determines $w_{ij}=\sqrt{%
R_{ijji}}$ uniquely up to a sign (which can be fixed in a consistent way
from the remaining relations up to the overall sign, $w\to -w$). Thus, to
specify Hamiltonian operator (\ref{eq:32}) it is sufficient to specify the
corresponding Monge metric $g_{ij}$.

\section{Example}

\label{sec:ex} The second flow of the $so(3)$ version of the WKI hierarchy
\cite{WKI} has the form
\begin{equation*}
\left(
\begin{array}{c}
p \\
q%
\end{array}
\right)_t= \left(
\begin{array}{c}
\frac{p_x}{(p^2+q^2+1)^{3/2}} \\
\frac{q_x}{(p^2+q^2+1)^{3/2}}%
\end{array}
\right)_{xx}.
\end{equation*}
It was demonstrated in \cite{MMZ} that this system possesses a
bi-Hamiltonian representation
\begin{equation}
\left(
\begin{array}{c}
p \\
q%
\end{array}
\right)_t=A\left(
\begin{array}{c}
\delta H/\delta p \\
\delta H/\delta q%
\end{array}
\right)=B\left(
\begin{array}{c}
\delta G/\delta p \\
\delta G/\delta q%
\end{array}
\right),
\end{equation}
with the Hamiltonians
\begin{equation*}
H=\int \sqrt{p^2+q^2+1} \ dx, ~~~~~ G=\int \frac{qp_x-pq_x}{\sqrt{p^2+q^2+1}(%
\sqrt{p^2+q^2+1}+1)}\ dx,
\end{equation*}
and the Hamiltonian operators
\begin{equation*}
A= \partial_x^{2}%
\begin{pmatrix}
\partial_x^{-1} - \tilde{q}\partial_x^{-1}\tilde{q} & \tilde{q}%
\partial_x^{-1}\tilde{p} \\
\tilde{p}\partial_x^{-1}\tilde{q} & \partial_x^{-1} - \tilde{p}%
\partial_x^{-1}\tilde{p}%
\end{pmatrix}%
\partial_x^{2}, ~~~~~ B=
\begin{pmatrix}
0 & 1 \\
-1 & 0%
\end{pmatrix}
\partial_x^{2},
\end{equation*}
where we use the notation
\begin{equation*}
\tilde{p} = \frac{p}{\sqrt{p^2+q^2+1}},\qquad \tilde{q} = \frac{q}{\sqrt{%
p^2+q^2+1}}.
\end{equation*}
The operator $A$ can be rewritten in form (\ref{eq:32}),
\begin{equation*}
A=\partial_x^{}\circ A_1\circ \partial_x^{},
\end{equation*}
where
\begin{equation*}
A_1=%
\begin{pmatrix}
1 - \tilde{q}^2 & \tilde{q}\tilde{p} \\
\tilde{p}\tilde{q} & 1 - \tilde{p}^2%
\end{pmatrix}%
\partial_x^{} +
\begin{pmatrix}
- \tilde{q}\tilde{q}_x & \tilde{p}\tilde{q}_x \\
\tilde{q}\tilde{p}_x & - \tilde{p}\tilde{p}_x%
\end{pmatrix}
+
\begin{pmatrix}
\tilde{q}_x\partial_x^{-1}\tilde{q}_x & - \tilde{q}_x\partial_x^{-1}\tilde{p}%
_x \\
- \tilde{p}_x\partial_x^{-1}\tilde{q}_x & \tilde{p}_x\partial_x^{-1}\tilde{p}%
_x%
\end{pmatrix}%
.
\end{equation*}
For the corresponding Monge metric $g_{ij}$ and the skew-symmetric 2-form $%
w_{ij}$ we obtain
\begin{equation*}
g_{ij} =
\begin{pmatrix}
q^2 + 1 & -pq \\
-pq & p^2 + 1%
\end{pmatrix}%
, ~~~ w_{ij}=\frac{1}{\sqrt{p^2+q^2+1}}%
\begin{pmatrix}
0 & 1 \\
-1 & 0%
\end{pmatrix}%
.
\end{equation*}

\section{Classification results}

The class of nonlocal operators (\ref{eq:32}) is invariant under projective
transformations of the form
\begin{equation}
\tilde u^i= \frac{l^i({\ u})}{l({\ u})}, ~~~ \tilde g= \frac{g}{l^4({\ u})},
~~~ \tilde w= \frac{w}{l^2({\ u})},  \label{group}
\end{equation}
where $l_i, l$ are linear forms in the flat coordinates. Here $g=g_{ij}$ and
$w=w_{ij}$ 
are the corresponding Monge metric and the skew-symmetric 2-form. This
symmetry reflects the reciprocal invariance of nonlocal Hamiltonian
formalism (\ref{eq:32}), thus generalising the analogous result known in the
local case \cite{FPV}. All our classification results are formulated modulo
this equivalence.

Theorem \ref{Jacobi} leads to a classification of nonlocal $n$-component
Hamiltonian operators (\ref{eq:32}) based on normal forms of Monge metrics
in dimensions $n=2$ and $n=3$ (due to the skew-symmetry of $w$ there exist
no nonlocal operators of this type for $n=1$).

\subsection{2-component case}

\label{sec:2}

\begin{theorem}
\label{2D} In the 2-component case, every Monge metric gives rise to a
Hamiltonian operator of type (\ref{eq:32}).
\end{theorem}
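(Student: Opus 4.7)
The plan rests on the Remark following Theorem \ref{Jacobi}: given a Monge metric $g_{ij}$, the coefficients $c^{ij}_k$ are determined by the stated explicit formula, and the skew 2-form $w_{ij}$ is fixed up to an overall sign by $w_{ij}^2 = R_{ijji}$. Hence Theorem \ref{2D} reduces to showing that, for every 2D Monge metric, the choice $w_{12} = \sqrt{R_{1221}}$ (with a consistent sign) actually satisfies the remaining conditions of (\ref{cond1}).

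The first substantive step is the curvature condition $R_{ijkl} = w_{il} w_{jk} - w_{ik} w_{jl}$. Dimension two yields a dramatic simplification: combining $\nabla g = 0$ with the manifest antisymmetry of $R$ in $(k,l)$ forces antisymmetry in both pairs of indices, so $R_{ijkl}$ is determined by the single entry $R_{1221}$. The right-hand side with skew $w$ shares the same symmetries and reduces to $w_{12}^2$ at indices $(1,2,2,1)$, so the identity holds by the very definition of $w_{12}$.

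The delicate step is verifying $w_{ij,l} - c^s_{ij} w_{sl} = 0$. Skew-symmetry of $w$ collapses this in 2D to the two equations
\[
w_{12,1} = -c^2_{12}\, w_{12}, \qquad w_{12,2} = c^1_{12}\, w_{12},
\]
which, after differentiating $w_{12}^2 = R_{1221}$, become purely metric-level identities
\[
R_{1221,1} = -2\, c^2_{12}\, R_{1221}, \qquad R_{1221,2} = 2\, c^1_{12}\, R_{1221}.
\]
My plan is to verify these by passing to normal forms: 2D Monge metrics have polynomial entries of degree $\le 2$ subject to $g_{ij,k}+g_{jk,i}+g_{ki,j}=0$, and modulo the projective group (\ref{group}) there are only finitely many canonical representatives. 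On each of them the required identities reduce to a direct computation.

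The main obstacle is precisely this last step: there is no manifest Bianchi-type identity forcing $R_{1221,l}$ to be related to $c^s_{12} R_{1221}$ as above for a general metric-compatible connection with torsion, so the argument genuinely exploits the restrictive structure of 2D Monge metrics. A uniform, classification-free derivation (perhaps through an appropriately modified Bianchi identity for $\nabla$) would be more elegant, but its existence is not apparent from the conditions (\ref{cond1}) alone.
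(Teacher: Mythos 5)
Your plan correctly isolates conditions (\ref{cond1}) as the content of the theorem, but it conflates the curvature relation $R_{ijkl}=w_{il}w_{jk}-w_{ik}w_{jl}$ with the actual last condition of (\ref{cond1}), namely $c_{nml,k}+c^s_{ml}c_{snk}+w_{ml}w_{nk}=0$. The curvature relation is only a consequence of that condition (the paper says ``implies''), obtained by antisymmetrisation, and is strictly weaker. Concretely, in two components take $(m,l)=(1,2)$ (the other choices are trivial or differ by a sign); since $c_{s11}=c_{s22}=0$ and $w_{11}=w_{22}=0$, the last condition of (\ref{cond1}) amounts to the four scalar equations
\begin{equation*}
c_{112,1}=0,\qquad c_{212,2}=0,\qquad c_{112,2}+c^s_{12}c_{s12}+w_{12}^2=0,\qquad c_{212,1}+c^s_{12}c_{s21}-w_{12}^2=0,
\end{equation*}
i.e.\ the single relation fixing $w_{12}^2=-(c_{112,2}+c^s_{12}c_{s12})$ (your curvature identity) \emph{plus} three $w$-independent constraints on the metric, $c_{112,1}=0$, $c_{212,2}=0$, $c_{112,2}+c_{212,1}=0$. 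Your symmetry argument (``holds by the very definition of $w_{12}$'') only disposes of the first of these; the other three are not consequences of the symmetries of $R$ and $w$, and your proposal never addresses them. This is a genuine gap, though a fixable one: writing the general $2$-component Monge metric as in the paper, with $g_{11}=a(u^2)^2-2bu^2+\alpha$, $g_{12}=-au^1u^2+bu^1-cu^2+\beta$, $g_{22}=a(u^1)^2+2cu^1+\gamma$, one gets $c_{112}=\tfrac13(g_{12,1}-g_{11,2})=b-au^2$ and $c_{212}=\tfrac13(g_{22,1}-g_{12,2})=c+au^1$, so the three extra constraints hold identically — but this is exactly where the quadratic (Monge) structure enters and it must appear in the proof.

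With that repaired, the remainder of your plan — taking $c$ from the metric, setting $w_{12}^2=R_{1221}$, and checking the fifth condition of (\ref{cond1}) in the form $w_{12,1}=-c^2_{12}w_{12}$, $w_{12,2}=c^1_{12}w_{12}$ (your passage to $R_{1221,l}$ assumes $w_{12}\neq0$; the degenerate case $R_{1221}\equiv0$ just gives the local operators, where this condition is vacuous) — is sound, provided you actually produce the complete list of normal forms on which the check is performed. This is close to the paper's route, which dispenses with curvature altogether: it writes the six-parameter family above, exhibits $w_{12}=1/\sqrt{\det g}$ (degenerating to $w=0$, i.e.\ local operators, in the appropriate subcases), and reduces to the normal forms by affine transformations, verifying (\ref{cond1}) directly case by case.
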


\begin{proof}
Every 2-component Monge metric is a quadratic form in the differentials $dp, dq$ and $ pdq-qdp$. Thus, it can be represented as
$$
g=a(pdq-qdp)^2 + 2(pdq-qdp)(bdp +cdq)+
 \alpha dp^2+2\beta dpdq+\gamma dq^2
$$
where $a, b, c, \alpha, \beta, \gamma$ are arbitrary constants. Every such metric gives rises to nonlocal operator  (\ref{eq:32}) with
$$
g_{ij}=\left(
      \begin{array}{cc}
	aq^{2}-2bq+\alpha & -apq +bp-cq+\beta\\
	-apq +bp-cq+\beta & ap^{2}+2cp+\gamma
      \end{array}
    \right), ~~~ w_{ij}=\frac{1}{\sqrt{\det g}}\begin{pmatrix}
0 & 1 \\
-1 & 0%
\end{pmatrix}.
$$
This expression can be brought to normal form using affine transformations of $p$ and $q$.

 \noindent{\bf Case $a\neq 0$}. Using translations of $p$ and $q$ we can set $b=c=0$. The rest depends on whether $\alpha \gamma-\beta^2$ is non-zero or not. In the non-zero case,
using the remaining (complex) affine freedom we can also set $a=\alpha=\gamma=1,\ \beta=0$.  This results in the metric
  \begin{equation*}
g_{ij}=\left(
      \begin{array}{cc}
	q^{2}+1 & -pq \\
	-pq & p^{2}+1
      \end{array}
    \right),
  \end{equation*}
  which corresponds to the nonlocal Hamiltonian operator $A$ from  Section
  \ref{sec:ex}. In the degenerate case $\alpha \gamma-\beta^2=0$ we can  reduce
  the metric to the form
    \begin{equation*}
g_{ij}=\left(
      \begin{array}{cc}
	q^{2}+1 & -pq \\
	-pq & p^{2}
      \end{array}
    \right),
  \end{equation*}
  which gives rise to the local Hamiltonian operator
  $$
  A=\ddx{}
  \begin{pmatrix}
    \ddx{}&  \ddx{}\frac{q}{p}\\
    \frac{q}{ p}\ddx{}& \frac{q^{2} +1}{2p^{2}}\ddx{}+\ddx{}\frac{q^{2} +1}{2p^{2}}\cr
  \end{pmatrix}
       \ddx{}.
  $$

  \noindent{\bf  Case $a=0$}. Modulo affine transformations one can always assume
  $b=1, \ c=0$ (if  $b=c=0$  we have a constant-coefficient metric  corresponding to a constant-coefficient local  operator).  Using appropriate translations of $p$ and $q$  one  can set
  $\alpha=\beta=0$. This results in the Monge metric
  \begin{equation*}
  g_{ij}= \left(
      \begin{array}{cc}
	-2q & p \\
	p & \gamma
      \end{array}%
    \right).
  \end{equation*}%
The case $\gamma \ne 0$ corresponds to  nonlocal operator (\ref{eq:32}) with the skew-symmetric 2-form $w$ defined as
$$
w_{ij}=\frac{1}{\sqrt{-2\gamma q-p^2}}\begin{pmatrix}
0 & 1 \\
-1 & 0%
\end{pmatrix}.
$$
In the case $\gamma=0$ the above metric gives rise to the local Hamiltonian operator
\begin{equation*}
A= \ddx{}\left(
\begin{array}{cc}
0 & \displaystyle\ddx{}\frac{1}{p} \\
\displaystyle\frac{1}{p}\ddx{} & \displaystyle
\frac{q}{p^{2}}\ddx{}+\ddx{}\frac{q}{p^{2}}
\end{array}%
\right) \ddx{},
\end{equation*}
which appeared as a Hamiltonian structure of Monge-Amp\`ere equations \cite{OM98}, see also \cite{FPV}.
\end{proof}



\subsection{3-component case}

\label{sec:3}

Every 3-component Monge metric $g$ can be written as a quadratic form in the
6 differentials $du^i, \ u^idu^j-u^jdu^i, \ i, j=1, 2, 3$.
Let $Q$ denote the $6\times 6$ symmetric matrix of this quadratic form. Let $%
P$ denote the $6\times 6$ symmetric matrix corresponding to the quadratic
Pl\"ucker relation,
\begin{equation*}
du^1(u^2du^3-u^3du^2)+du^2(u^3du^1-u^1du^3)+du^3(u^1du^2-u^2du^1)=0.
\end{equation*}
Monge metrics are classified by their Segre types, that is, Jordan normal
forms of the operator $Q P^{-1}$. In what follows we use the standard
notation: thus, Segre type $[123]$ indicates that the operator $Q P^{-1}$
has three Jordan blocks of sizes $1\times 1, 2\times 2$ and $3\times 3$,
respectively. Additional round brackets indicate coincidences among the
eigenvalues of these blocks: thus, $[(12)3]$ indicates that the eigenvalue $%
\lambda_1$ of the first Jordan block coincides with the eigenvalue $%
\lambda_2 $ of the second one, etc. We refer to \cite{Jess, fmoss} for the
list of normal forms of 3-component Monge metrics. All classification
results are formulated modulo projective equivalence (\ref{group}). In what
follows we only present the Monge metric $g_{ij}$ and the skew-symmetric
2-form $w_{ij}$ (which uniquely specify the corresponding nonlocal operator (%
\ref{eq:32}); note that the 2-form $w$ is defined up to an overall sign). The
Theorem below provides a complete description of 3-component nonlocal
operators (\ref{eq:32}) by going through the list of all Segre types and
indicating particular allowed subcases that give rise to nonlocal operators.
These are singled out by  conditions (\ref{cond1}). In each case we
explicitly state the equation of the singular surface, $\det g=0$, which is
a quartic in $\mathbb{P}^3$ (possibly, reducible). It turns out that this
quartic degenerates into a double quadric if and only if the operator is
local.

\begin{theorem}
\label{3D} Modulo (complex) projective transformations (\ref{group}) any
nonlocal Hamiltonian operator \eqref{eq:32} can be reduced to one of the
following normal forms in the Segre classification:

\begin{enumerate}
\item \textbf{Segre type $[114]$.} Here the only allowed subcase is $[(114)]$
which corresponds to the local operator defined by the metric $g^{(4)}$ from
\cite{FPV}:
\begin{equation*}
g_{ij}^{(4)}=
\begin{pmatrix}
-2u^2 & u^1 & 0 \\
u^1 & 0 & 0 \\
0 & 0 & 1%
\end{pmatrix}%
, ~~~ w_{ij}=0.
\end{equation*}
We have $\det g^{(4)} = - (u^1)^2$, the singular surface is a pair of double
planes (one of them at infinity).

\item \textbf{Segre type $[123]$.} Here the only allowed subcase is $[(123)]$
which corresponds to the local operator defined by the metric $g^{(5)}$ from
\cite{FPV}:
\begin{equation*}
g^{(5)}_{ij}=%
\begin{pmatrix}
-2u^2 & u^1 & 1 \\
u^1 & 1 & 0 \\
1 & 0 & 0%
\end{pmatrix}%
, ~~~ w_{ij}=0.
\end{equation*}
We have $\det g^{(5)} = -1$, hence the singular surface is the quadruple
plane at infinity.

\item \textbf{Segre type $[222]$.} Here the only allowed subcase is $[(222)]$
which corresponds to the local operator defined by the metric $g^{(6)}$ from
\cite{FPV}:
\begin{equation*}
g^{(6)}_{ij} =
\begin{pmatrix}
1 & 0 & 0 \\
0 & 1 & 0 \\
0 & 0 & 1%
\end{pmatrix}%
, ~~~ w_{ij}=0.
\end{equation*}
We have $\det g^{(6)} = 1 $, the singular surface is the quadruple plane at
infinity.

\item \textbf{Segre type $[15]$.} Here the only allowed subcase is $[(15)]$
which gives rise to the nonlocal operator with the following metric $g$ and
2-form $w$:
\begin{gather*}
g_{ij}=%
\begin{pmatrix}
0 & 1 & 2 u^{3} \\
1 & -2 u^{3} & u^{2} \\
2 u^{3} & u^{2} & -4 u^{1}%
\end{pmatrix}%
, \\
w_{12} = 0 ,\quad w_{31}=-\frac{1}{ \sqrt{u^{1} +u^{2} u^{3} +2 (u^{3})^{3}}}%
,\quad w_{23}=\frac{u^{3}}{ \sqrt{u^{1} +u^{2} u^{3} +2 (u^{3})^{3}}}.
\end{gather*}
We have $\det g = 4 u^{1} +4u^{2} u^{3} +8 (u^{3})^{3}$, the singular
surface is a Cayley's ruled cubic and the plane at infinity.

\item \textbf{Segre type $[24]$.} Here the only allowed subcase is $[(24)]$,
which further splits into two projectively dual subcases. The first subcase
gives rise to the nonlocal operator with the following metric $g$ and 2-form
$w$:
\begin{gather*}
g_{ij}=
\begin{pmatrix}
1 & 0 & u^{3} \\
0 & 1 & 0 \\
u^{3} & 0 & -2 u^{1}%
\end{pmatrix}%
, \\
w_{12}=w_{23}=0,\qquad w_{31}=\frac{1}{\sqrt{2 u^1 + (u^3)^2 }}.
\end{gather*}
We have $\det g = -2 u^1 - (u^3)^2$, the singular surface is a quadratic
cone and a double plane at infinity. The second subcase corresponds to
\begin{gather*}
g_{ij}=%
\begin{pmatrix}
0 & 1 & 0 \\
1 & (u^3)^2 & - u^2 u^3 \\
0 & -u^2 u^3 & 1+(u^2)^2%
\end{pmatrix}%
, \\
w_{12}=w_{31}=0,\qquad w_{23}=\frac{1}{\sqrt{1+(u^2)^2}}.
\end{gather*}
We have $\det g = -1 - (u^2)^2$, the singular surface is a pair of planes
and the double plane at infinity.

\item \textbf{Segre type $[33]$.} Here the only allowed subcase is $[(33)]$
which gives rise to the nonlocal operator with the following metric $g$ and
2-form $w$:
\begin{gather*}
g_{ij}=%
\begin{pmatrix}
0 & 1 & 1 \\
1 & -2u^3 & u^2+ u^3 \\
1 & u^2+ u^3 & -2 u^2%
\end{pmatrix}%
, \\
w_{12}=w_{31}=0,\qquad w_{23}=\frac{2}{\sqrt{4u^2+4u^3}}.
\end{gather*}
We have $\det g =4(u^2+u^3)$, the singular surface is a plane and another
triple plane at infinity.

\item \textbf{Segre type $[6]$.} This case does not correspond to any
Hamiltonian operator.

\item \textbf{Segre type $[1122]$.} There are 3 allowed subcases:

\begin{itemize}
\item subcase [(11)22] with the additional constraint $2\lambda_1=\lambda_3+%
\lambda_4$. This gives rise to the nonlocal operator with the following
metric $g$ and 2-form $w$:
\begin{gather*}
g_{ij}=%
\begin{pmatrix}
1 & -2\lambda u^3 & \lambda u^2 \\
-2\lambda u^3 & 4 & \lambda u^1 \\
\lambda u^2 & \lambda u^1 & 0%
\end{pmatrix}%
, \\
w_{12}=0,\qquad w_{23}=\frac{\lambda^2u^1}{\sqrt{\det g}},\qquad w_{31}=%
\frac{\lambda^2u^2}{\sqrt{\det g}},
\end{gather*}
where $\lambda=\lambda_3-\lambda_4$. We have $\det g
=-\lambda^2(u^1)^2-4\lambda^2(u^2)^2-4\lambda^3u^1u^2u^3$, the singular
surface is a cubic and the plane at infinity.

\item subcase [11(22)] with the additional constraint $\lambda_1+\lambda_2=2%
\lambda_3$. This gives rise to the nonlocal operator with the following
metric $g$ and 2-form $w$:
\begin{gather*}
g_{ij}=%
\begin{pmatrix}
1+\lambda (u^2)^2 & -\lambda u^1 u^2 & 0 \\
-\lambda u^1 u^2 & 4+\lambda(u^1)^2 & 0 \\
0 & 0 & \lambda%
\end{pmatrix}%
, \\
w_{12}=\frac{2\lambda}{\sqrt{\det g}},\qquad w_{31}= w_{23}=0,
\end{gather*}
where $\lambda=\lambda_1-\lambda_2$. We have $\det g
=4\lambda+\lambda^2(u^1)^2+4\lambda^2(u^2)^2$, the singular surface is a
quadratic cone and the double plane at infinity.

\item subcase [1(12)2], with the additional constraint $3\lambda_2=2%
\lambda_4+\lambda_1$. This gives rise to the nonlocal operator with the
following metric $g$ and 2-form $w$:
\begin{gather*}
g_{ij}=%
\begin{pmatrix}
1+\lambda (u^2)^2 & \lambda(- u^1 u^2+u^3) & \lambda u^2 \\
\lambda(- u^1 u^2+u^3) & 4+\lambda(u^1)^2 & -2\lambda u^1 \\
\lambda u^2 & -2\lambda u^1 & \lambda%
\end{pmatrix}%
, \\
w_{12}=\frac{\lambda\sqrt{\lambda}u^1}{\sqrt{-\det g}},\qquad
w_{23}=0,\qquad w_{31}=-\frac{\lambda\sqrt{\lambda}}{\sqrt{-\det g}},
\end{gather*}
where $\lambda=\lambda_1-\lambda_2$. We have $\det g
=4\lambda-3\lambda^2(u^1)^2-4\lambda^3(u^1u^2)^2+\lambda^3(u^3-u^1u^2)^2$,
the singular surface is an irreducible quartic.
\end{itemize}

\item \textbf{Segre type $[1113]$.} There are 2 essentially different
allowed subcases:

\begin{itemize}
\item subcase [11(13)], with the additional constraint $2\lambda_3=%
\lambda_1+\lambda_2$. This gives rise to the nonlocal operator with the
following metric $g$ and 2-form $w$:
\begin{gather*}
g_{ij}=%
\begin{pmatrix}
2u^3+\lambda (u^2)^2 & -1-\lambda u^1 u^2 & -u^1 \\
-1-\lambda u^1 u^2 & \lambda(u^1)^2 & 0 \\
-u^1 & 0 & \lambda%
\end{pmatrix}%
, \\
w_{12}=\frac{\lambda}{\sqrt{-\det g}},\qquad w_{23}=0,\qquad w_{31}=\frac{%
\lambda u^1}{\sqrt{-\det g}},
\end{gather*}
where $\lambda=\lambda_1-\lambda_2$. We have $\det g =-\lambda-2\lambda^2
u^1u^2+2\lambda^2u^3(u^1)^2-\lambda(u^1)^4$, the singular surface is an
irreducible quartic.

\item subcase [(11)13], with the additional constraint $4\lambda_1=%
\lambda_3+3\lambda_4$. This gives rise to the nonlocal operator with the
following metric $g$ and 2-form $w$:
\begin{gather*}
g_{ij}=%
\begin{pmatrix}
2u^3-2\lambda(u^3)^2 & -1-2\lambda u^3 & -u^1+\lambda u^2+2\lambda u^1 u^3
\\
-1-2\lambda u^3 & -2\lambda & \lambda u^1 \\
 -u^1+\lambda u^2+2\lambda u^1 u^3 & \lambda u^1 & -2\lambda(u^1)^2%
\end{pmatrix}%
, \\
\omega_{12}=0,\qquad \omega_{23}=\frac{\sqrt{3}\lambda^2u^1}{\sqrt{-\det g}}%
,\qquad \omega_{31}=\frac{\sqrt{3}\lambda^2u^2}{\sqrt{-\det g}},
\end{gather*}
where $\lambda=2(\lambda_1-\lambda_4)$. We have $\det g =
2  \lambda  (\lambda^{2}  (u^{1})^{2}  (u^{3})^{2}
  +2  \lambda^{2}  u^{1}  u^{2}  u^{3}
  +\lambda^{2}  (u^{2})^{2}
  +3  \lambda  (u^{1})^{2}  u^{3}
  -3  \lambda  u^{1}  u^{2}
  +3  (u^{1})^{2}
)$, the singular surface is an
irreducible quartic.
\end{itemize}

\item \textbf{Segre type $[11112]$.} There are 2 essentially different
allowed subcases:

\begin{itemize}
\item subcase $[111(12)]$, with the additional constraint $%
\lambda_1+\lambda_2+\lambda_3=3\lambda_4$. This gives rise to the nonlocal
operator with the following metric $g$ and 2-form $w$:
\begin{gather*}
g_{ij}=%
\begin{pmatrix}
1+\lambda(u^2)^2+\mu(u^3)^2 & -\lambda(u^3+u^1 u^2) & -\lambda u^2-\mu u^1
u^3 \\
-\lambda(u^3+u^1 u^2) & \lambda (u^1)^2+\mu & 2\lambda u^1 \\
-\lambda u^2-\mu u^1 u^3 & 2\lambda u^1 & \lambda+\mu(u^1)^2%
\end{pmatrix}%
, \\
w_{12}=\sqrt{\frac{\lambda(\mu^2-\lambda^2)}{\det g}}u^1,\qquad
w_{23}=0,\qquad w_{31}=\sqrt{\frac{\lambda(\mu^2-\lambda^2)}{\det g}},
\end{gather*}
where $\lambda=\lambda_1-\lambda_4$ and $\mu=\lambda_3-\lambda_2$. We have \begin{multline*}
  \det g = -\lambda^{3}  (u^{1} u^{2})^{2}
  +2  \lambda^{3} u^{1} u^{2} u^{3}
  -\lambda^{3}  (u^{3})^{2}-3  \lambda^{2}  (u^{1})^{2}
  \\
 +\lambda  \mu ^{2}  (u^{1} u^{2})^{2}
-2  \lambda  \mu ^{2}  u^{1}  u^{2}  u^{3}+\lambda  \mu ^{2}  (u^{3})^{2}
+\lambda  \mu   (u^{1})^{4}+\lambda  \mu +\mu ^{2}  (u^{1})^{2},
\end{multline*}
the
singular surface is an irreducible quartic.
Note that the additional
constraint $\lambda^2=\mu^2$ leads to a local operator.

\item subcase $[(11)112]$, with the additional constraint $%
4\lambda_1=\lambda_3+\lambda_4+2\lambda_5$. This gives rise to the nonlocal
operator with the following metric $g$ and 2-form $w$:
\begin{gather*}
g_{ij}=%
\begin{pmatrix}
1+\mu(u^3)^2 & -2\beta u^3 & \beta u^2-\mu u^1 u^3 \\
-2\beta u^3 & \mu & \beta u^1 \\
\beta u^2-\mu u^1 u^3 & \beta u^1 & \mu(u^1)^2%
\end{pmatrix}%
, \\
w_{12}=0,\qquad w_{23}=\sqrt{\frac{\beta^2(\beta^2-\mu^2)}{\det g}}%
u^1,\qquad w_{31}=\sqrt{\frac{\beta^2(\beta^2-\mu^2)}{\det g}}u^2,
\end{gather*}
where $\mu=\lambda_3-\lambda_4$ and $\beta=2(\lambda_1-\lambda_5)$. We have 
\begin{multline*}
  \det g = 
2  \lambda  (\lambda^{2}  (u^{1} u^{3})^{2}
  +2  \lambda^{2}  u^{1}  u^{2}  u^{3}  +\lambda^{2}  (u^{2})^{2}
  +3  \lambda  (u^{1})^{2}  u^{3}  -3  \lambda  u^{1}  u^{2}  +3 (u^{1})^{2}),
\end{multline*}
the
singular surface is an irreducible quartic. Note that the additional
constraint $\beta^2=\mu^2$ leads to a local operator.
\end{itemize}


\item \textbf{Segre type $[111111]$.} Up to relabelling of the eigenvalues,
there is essentially only one allowed subcase, namely [1111(11)], with the
additional constraint $\lambda_1+\lambda_2+\lambda_3+\lambda_4=4\lambda_5$.
This gives rise to the nonlocal operator with the following metric $g$ and
2-form $w$:
\begin{gather*}
g_{ij}=%
\begin{pmatrix}
a_3(u^2)^2+a_2(u^3)^2 & -a_3 u^1u^2+\alpha u^3 & -a_2 u^1 u^3+\alpha u^2 \\
-a_3 u^1u^2+\alpha u^3 & a_2+a_3(u^1)^2 & -2\alpha u^1 \\
-a_2 u^1 u^3+\alpha u^2 & -2\alpha u^1 & a_3+a_2(u^1)^2%
\end{pmatrix}%
, \\
w_{12}=\sqrt{\frac{(a_2^2-\alpha^2)(a_3^2-\alpha^2)}{\det g}}u^3,\qquad
w_{23}=0,\qquad w_{31}=\sqrt{\frac{(a_2^2-\alpha^2)(a_3^2-\alpha^2)}{\det g}}%
u^2,
\end{gather*}
where $a_2=\lambda_3-\lambda_4$, $a_3=\lambda_1-\lambda_2$ and $%
\alpha=2\lambda_5-\lambda_3-\lambda_4$. We have 
\begin{multline*}
  \det g = a_{2}^{2}  a_{3}  (u^{1} u^{2})^{2}
+a_{2}^{2}  a_{3}  (u^{3})^{2}
+2  a_{2}^{2}  \alpha  u^{1}  u^{2}  u^{3}
\\
+a_{2}  a_{3}^{2}  (u^{1}u^{3})^{2}
+a_{2}  a_{3}^{2}  (u^{2})^{2}
-a_{2}  \alpha^{2}  (u^{1} u^{3})^{2}
-a_{2}  \alpha^{2}  (u^{2})^{2}
\\
+2  a_{3}^{2}  \alpha  u^{1} u^{2} u^{3}
-a_{3}  \alpha^{2}  (u^{1} u^{2})^{2}
-a_{3}  \alpha^{2}  (u^{3})^{2}
-4  \alpha^{3}  u^{1}  u^{2}  u^{3},
\end{multline*}
the singular surface is an
irreducible quartic.
Note that the additional constraints $%
\alpha_2^2=\alpha^2$ or $\alpha_2^2=\alpha^2$ lead to local operators.
\end{enumerate}
\end{theorem}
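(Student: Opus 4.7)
The plan is to run through the known classification of 3-component Monge metrics by Segre type from \cite{Jess, fmoss} and, for each normal form, determine which subcases (i.e. which coincidences/linear relations among the eigenvalues of $QP^{-1}$) are compatible with the full system (\ref{cond1}). Projective equivalence (\ref{group}) is used throughout to reduce each normal form to a minimal set of eigenvalue parameters.

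The essential simplification, already contained in the Remark after Theorem \ref{Jacobi}, is that once a Monge metric $g_{ij}$ is chosen, the coefficients $c^{ij}_k$ are forced by
$c^{ij}_k = \tfrac{1}{3}g^{qi}g^{pj}(g_{pk,q}-g_{pq,k})$,
and the candidate 2-form is forced up to sign by $w_{ij}=\sqrt{R_{ijji}}$, where $R_{ijkl}$ is the curvature of the connection with Christoffel symbols $c^i_{jk}=g_{js}c^{si}_k$. The problem therefore reduces to an algebraic one: compute $R_{ijkl}$ for each parametric family, check whether it factors as $w_{il}w_{jk}-w_{ik}w_{jl}$ for some skew $w$, then verify the first-order constraint $w_{ij,l}-c^s_{ij}w_{sl}=0$.

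For each Segre type I would proceed as follows. First, write out the normal form $g_{ij}(u^1,u^2,u^3;\lambda_1,\ldots)$, invert it and compute $c^{ij}_k$ and $R_{ijkl}$. Second, impose the curvature factorisation $R_{ijkl}=w_{il}w_{jk}-w_{ik}w_{jl}$: in three dimensions the curvature has $6$ independent components while $w$ has $3$, so the over-determined piece turns into polynomial relations among the $\lambda_i$. Solving these picks out exactly the allowed subcases listed in the statement, such as $2\lambda_1=\lambda_3+\lambda_4$ in $[(11)22]$ or $4\lambda_1=\lambda_3+3\lambda_4$ in $[(11)13]$. Third, check the covariant condition $w_{ij,l}=c^s_{ij}w_{sl}$, which has to hold on the nose (with the signs of the $w_{ij}$ fixed consistently from the off-diagonal curvatures). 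Finally, compute $\det g$ to describe the singular surface and distinguish the local cases ($w\equiv 0$, matching the list in \cite{FPV}) from the genuinely nonlocal ones.

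The main obstacle is computational bulk rather than any conceptual difficulty: the calculation must be repeated uniformly across all ten Segre types, with several subcases inside each, and the types $[6]$ and all generic (non-coincident) instances have to be explicitly ruled out. The degenerate closed types $[(114)]$, $[(123)]$, $[(222)]$ will reproduce the local classification of \cite{FPV}, while the remaining types $[1122]$, $[1113]$, $[11112]$, $[111111]$ produce systems in the $\lambda_i$ whose only solutions are single linear constraints; in these cases extracting the correct branch of the square root for $w_{ij}$ and verifying covariant constancy is best handled with computer algebra. Once this case analysis is complete, collecting the resulting metrics, 2-forms and determinants yields the list displayed in the theorem.
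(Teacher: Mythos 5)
Your proposal is correct and follows essentially the same route as the paper: run through the Segre normal forms of 3-component Monge metrics from \cite{Jess, fmoss} modulo the projective group (\ref{group}), use the fact that $c^{ij}_k$ is determined by $g$ and $w_{ij}$ by the curvature up to sign, and impose conditions (\ref{cond1}) case by case (by computer algebra) to extract the eigenvalue constraints, the forms $w$, and $\det g$. The only caveat is that the last equation of (\ref{cond1}), $c_{nml,k}+c^s_{ml}c_{snk}+w_{ml}w_{nk}=0$, is a priori stronger than its skew-symmetrized consequence $R_{ijkl}=w_{il}w_{jk}-w_{ik}w_{jl}$, so in each case you should impose it in full (as the paper does by checking all of (\ref{cond1})) or justify the reduction to the curvature factorisation rather than assert it.
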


\section{Nonlocal operators via Dirac reduction}

\label{sec:Dirac}

Let us consider an $(n+1)$-component local third-order Hamiltonian operator
\begin{equation}
A ^{IJ}=\partial_x^{}\Big(G^{IJ}(u)\partial_x^{} + C^{IJ}_K(u) u^K_x\Big)%
\partial_x^{},  \label{n+1}
\end{equation}
represented in the flat coordinates $u^0, \dots, u^n$.
In this section we
will calculate the Dirac reduction of this operator to a hyperplane in the $%
u $-coordinates. Without any loss of generality one can assume that this
hyperplane is given by the equation $u^0=0$. In what follows we use the
following convention for the small and capital indices: $i, j, k\in \{1,
\dots, n\}, \ I, J, K\in \{0, \dots, n\}$.

\begin{theorem}
\label{TD} Dirac reduction of local Hamiltonian operator (\ref{n+1}) to the
hyperplane $u^0=0$ is given by nonlocal operator (\ref{eq:32}) where $%
g_{ij}=G_{ij},\ c_{ijk}=C_{ijk},\ w_{ij}=\frac{C^0_{ij}}{\sqrt{G^{00}}}$.
\end{theorem}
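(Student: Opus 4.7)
The plan is to invoke the standard Dirac reduction formula
\begin{equation*}
A^{ij}_D = A^{ij} - A^{i0}(A^{00})^{-1}A^{0j}
\end{equation*}
for the constraint $u^0=0$, treating $(A^{00})^{-1}$ as a formal pseudodifferential operator. Writing $A^{IJ}=\partial_x B^{IJ}\partial_x$ with $B^{IJ}=G^{IJ}\partial_x+C^{IJ}_K u^K_x$, the outermost $\partial_x$ and $\partial_x^{-1}$ factors telescope and the problem reduces to computing $B^{ij}-B^{i0}(B^{00})^{-1}B^{0j}$ on the surface $u^0=0$ (where in particular $u^0_x=0$).

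The decisive step is a factorisation of $B^{00}$. The Hamiltonian condition $g^{IJ}_{,K}=c^{IJ}_K+c^{JI}_K$ at $I=J=0$ forces $C^{00}_K=\tfrac{1}{2}G^{00}_{,K}$, so
\begin{equation*}
B^{00}\big|_{u^0=0}=G^{00}\partial_x+\tfrac{1}{2}(G^{00})_x=\sqrt{G^{00}}\,\partial_x\,\sqrt{G^{00}},
\end{equation*}
hence $(B^{00})^{-1}=(G^{00})^{-1/2}\partial_x^{-1}(G^{00})^{-1/2}$. Using the same relation together with $G^{i0}_{,K}=C^{i0}_K+C^{0i}_K$ one obtains the left and right decompositions
\begin{equation*}
B^{i0}=\tfrac{G^{i0}}{G^{00}}\,B^{00}+W^i_k u^k_x, \qquad B^{0j}=B^{00}\,\tfrac{G^{0j}}{G^{00}}-W^j_k u^k_x,
\end{equation*}
with $W^i_k:=C^{i0}_k-(G^{i0}/G^{00})C^{00}_k$. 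Substituting into $B^{i0}(B^{00})^{-1}B^{0j}$ and collapsing $B^{00}(B^{00})^{-1}=I$, three purely local terms and a single nonlocal term
\begin{equation*}
W^i_k u^k_x\,(B^{00})^{-1}\,W^j_l u^l_x=\tfrac{W^i_k}{\sqrt{G^{00}}}u^k_x\,\partial_x^{-1}\,\tfrac{W^j_l}{\sqrt{G^{00}}}u^l_x
\end{equation*}
survive.

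This already shows $A^{ij}_D$ belongs to the class (\ref{eq:32}), with upper-index data $g^{ij}=G^{ij}-G^{i0}G^{0j}/G^{00}$ (the Schur complement of $G^{00}$), $c^{ij}_k=C^{ij}_k-(G^{i0}C^{0j}_k+G^{0j}C^{i0}_k)/G^{00}+G^{i0}G^{0j}C^{00}_k/(G^{00})^2$, and $w^i_k=W^i_k/\sqrt{G^{00}}$. Translating to the low-index form of the theorem then amounts to the block-inverse identities $(g^{ij})^{-1}=G_{ij}$ (the Schur complement inverts to the lower-right $n\times n$ block of $G_{IJ}$) and $G_{il}G^{l0}=-G_{i0}G^{00}$. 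I expect this final bookkeeping to be the only genuine obstacle: using these identities the corrections $G_{is}G^{s0}C^{00}_j/G^{00}$ produced when lowering indices with $G_{is}$ precisely reassemble $G_{iS}C^{S0}_j$ into $C^0_{ij}$, giving $w_{ij}=C^0_{ij}/\sqrt{G^{00}}$, and analogously $c_{ijk}=G_{ir}G_{js}c^{sr}_k=C_{ijk}$.
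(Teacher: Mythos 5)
Your proposal is correct, and its core is the same as the paper's: apply the Dirac formula $A^{ij}-A^{i0}(A^{00})^{-1}A^{0j}$, use the skew-symmetry relation $C^{00}_K=\tfrac12 G^{00}_{,K}$ of the local operator to factor the $00$-block as $\sqrt{G^{00}}\,\partial_x\,\sqrt{G^{00}}$, and read off the Schur-complement data $g^{ij}=G^{ij}-G^{i0}G^{0j}/G^{00}$, $c^{ij}_k$, $w^i_k$, which coincide exactly with the expressions in the paper. The differences are in execution. First, the paper strips the outer $\partial_x$ factors by passing to potential coordinates $u^K=b^K_x$ and reducing to $b^0=0$, whereas you telescope $\partial_x\,\partial_x^{-1}$ directly in $A^{i0}(A^{00})^{-1}A^{0j}$; these are equivalent, and your left/right decompositions $B^{i0}=\tfrac{G^{i0}}{G^{00}}B^{00}+W^i_ku^k_x$, $B^{0j}=B^{00}\tfrac{G^{0j}}{G^{00}}-W^j_ku^k_x$ (the second needing $G^{0j}_{,k}=C^{0j}_k+C^{j0}_k$, as you note) make the cancellation of $B^{00}(B^{00})^{-1}$ transparent. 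Second, and more substantively, your final index-lowering route differs from the paper's: the paper obtains $c_{ijk}=C_{ijk}$ by appealing to the fact that $c$ is determined by $g$ (the Remark in Section 1) and verifies $w_{ij}=C^0_{ij}/\sqrt{G^{00}}$ by a longer raised-index computation involving the derivative-of-inverse identity; you instead lower indices with $g_{ij}=G_{ij}$ and use $G_{is}G^{s0}=-G_{i0}G^{00}$ so that the Schur-type correction terms reassemble into full capital-index contractions, $G_{iS}C^{S0}_j=C^0_{ij}$ and $G_{iS}G_{jR}C^{SR}_k=C_{ijk}$. I checked that this reassembly does work exactly as you expect, so although you state the last step as anticipated bookkeeping rather than writing it out, the argument is complete in substance and is, if anything, tidier than the paper's verification.
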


\begin{proof} We find it more convenient to work in potential coordinates $b^{K}$ defined
as $u^{K}=b_{x}^{K}$. In these coordinates operator (\ref{n+1}) takes
first-order form,
\begin{equation*}
A^{IJ}=-G^{IJ}(b_{x})\partial _{x}-C_{K}^{IJ}(b_{x})b_{xx}^{K}.
\end{equation*}%
Its Dirac reduction $\tilde{A}$ to the hyperplane $b^{0}=0$ is defined by the
formula
\begin{equation*}
\tilde{A}^{ij}=A^{ij}-A^{i0}(A^{00})^{-1}A^{0j}.
\end{equation*}%
Since $A^{00}=-G^{00}\partial _{x}-C_{K}^{00}b_{xx}^{K}$ and $%
G_{,K}^{00}=2C_{K}^{00}$ we obtain
\begin{equation*}
A^{00}=-G^{00}\partial _{x}-\frac{1}{2}G_{,K}^{00}b_{xx}^{K}=-G^{00}\partial
_{x}-\frac{1}{2}(G^{00})_{x}=-\sqrt{G^{00}}\partial _{x}\sqrt{G^{00}},
\end{equation*}%
so that
\begin{equation*}
(A^{00})^{-1}=-\frac{1}{\sqrt{G^{00}}}\partial _{x}^{-1}\frac{1}{\sqrt{G^{00}}}.
\end{equation*}%
Thus,
\begin{equation*}
\tilde{A}^{ij}=A^{ij}+A^{i0}\frac{1}{\sqrt{G^{00}}}\partial _{x}^{-1}\frac{1%
}{\sqrt{G^{00}}}A^{0j},
\end{equation*}%
where%
\begin{equation*}
A^{ij}=-G^{ij}\partial _{x}-C_{k}^{ij}b_{xx}^{k},
\end{equation*}%
note that $b^{0}=0$. Explicitly, this gives
\begin{equation*}
\tilde{A}^{ij}=-G^{ij}\partial _{x}-C_{k}^{ij}b_{xx}^{k}+(G^{i0}\partial
_{x}+C_{k}^{i0}b_{xx}^{k})\frac{1}{\sqrt{G^{00}}}\partial _{x}^{-1}\frac{1}{%
\sqrt{G^{00}}}(G^{0j}\partial _{x}+C_{k}^{0j}b_{xx}^{k}).
\end{equation*}%
This expression can be rewritten in the form 
\begin{equation*}
\tilde{A}^{ij}=-{g}^{ij}\partial _{x}-{c}%
_{k}^{ij}b_{xx}^{k}-w_{k}^{i}b_{xx}^{k}\partial _{x}^{-1}w_{m}^{j}b_{xx}^{m}
\end{equation*}%
which reduces to (\ref{eq:32}) in the original variables $u^i=b^i_x$. Here%
\begin{equation*}
g^{ij}=G^{ij}-\frac{G^{i0}G^{0j}}{G^{00}},\text{ \ }{c}_{k}^{ij}=C_{k}^{ij}-%
\frac{G^{i0}C_{k}^{0j}+C_{k}^{i0}G^{0j}}{G^{00}}+\frac{%
G^{i0}G^{0j}G_{,k}^{00}}{2(G^{00})^{2}},
\end{equation*}%
\begin{equation*}
w_{k}^{i}=\frac{1}{\sqrt{G^{00}}}C_{k}^{i0}-\frac{1}{2}\frac{%
G^{i0}G_{,k}^{00}}{(G^{00})^{3/2}}.
\end{equation*}%
The formula for $g^{ij}$ precisely means that $g_{ij}=G_{ij}$. The formula for $c_{ijk}$
follows from the fact the $c$ is determined by $g$ (see Remark at the end of
Section \ref{sec:intro}). It remains to prove that the expression for $w^i_k$ is equivalent to the formula $w_{ij}=\frac{%
C_{ij}^{0}}{\sqrt{G^{00}}}$.
Since
\begin{equation*}
C_{k}^{i0}=G^{i0}C_{0k}^{0}+G^{im}C_{mk}^{0}
\end{equation*}%
we obtain
\begin{equation*}
w_{k}^{i}=\frac{1}{\sqrt{G^{00}}}[G^{i0}C_{0k}^{0}+G^{im}C_{mk}^{0}]-\frac{1%
}{2}\frac{G^{i0}}{(G^{00})^{3/2}}G_{,k}^{00}.
\end{equation*}%
Taking into account%
\begin{equation*}
G^{ik}=g^{ik}+\frac{G^{i0}G^{0k}}{G^{00}}
\end{equation*}%
this gives
\begin{equation*}
w_{k}^{i}=\frac{1}{(G^{00})^{3/2}}\left[
G^{00}G^{i0}C_{0k}^{0}+(G^{00}g^{im}+G^{i0}G^{0m})C_{mk}^{0}-\frac{1}{2}%
G^{i0}G_{,k}^{00}\right] .
\end{equation*}%
Using the derivative of the inverse matrix,
\begin{equation*}
G_{,K}^{IJ}=-G^{IP}G_{PQ,K}G^{QJ},
\end{equation*}%
we obtain%
\begin{equation*}
G_{,k}^{00}=-(G^{00})^{2}G_{00,k}-2G^{00}G^{0m}G_{0m,k}-G^{0m}G^{0s}g_{ms,k}.
\end{equation*}%
Taking into account%
\begin{equation*}
C_{0k}^{0}=G^{00}C_{00k}+G^{0m}C_{m0k},\text{ \ }%
C_{ik}^{0}=G^{00}C_{0ik}+G^{0s}C_{sik},
\end{equation*}%
on simplification we obtain%
\begin{equation*}
w_{k}^{i}=\frac{g^{im}C^0_{mk}}{\sqrt{G^{00}}},
\end{equation*}%
which is equivalent to the required formula $w_{ij}=\frac{%
C_{ij}^{0}}{\sqrt{G^{00}}}$.
\end{proof}


\section{Skew-symmetry conditions and Jacobi identities: proof of Theorem
\protect\ref{Jacobi}}

\label{sec:proof}

The standard way to calculate skew-symmetry conditions and Jacobi identities
is based on the Gelfand-Dorfman approach \cite{GD}. In this Section we
utilise an alternative technique based on the theory of Poisson vertex algebras \cite{PV1, PV2}
which gives a completely algebraic approach to local and nonlocal
Hamiltonian operators. This is achieved by considering the differential
algebra corresponding, in the theory of the formal calculus of variations,
to the densities of local functionals -- usually, the space of differential
polynomials or some extension thereof. More precisely, any Poisson vertex
algebra defines a Poisson bracket on the space of local functionals and an
action of the space of local functionals on the space of their densities:
such objects are, equivalently, defined by Hamiltonian operators.
Conversely, a Hamiltonian operator on the space of densities, either
differential or pseudodifferential (under some additional technical
hypotheses), defines a Poisson vertex algebra.

\begin{definition}
A (nonlocal) Poisson vertex algebra (PVA) is a differential algebra $(%
\mathcal{A},\partial)$ endowed with a derivation $\partial$ and a bilinear
operation $\{\cdot_\lambda\cdot\}\colon\mathcal{A}\otimes\mathcal{A}\to%
\mathbb{R}((\lambda^{-1}))\otimes\mathcal{A}$ called a (nonlocal) \emph{$%
\lambda$ bracket}, satisfying the following set of properties:

\begin{enumerate}
\item $\{\partial f_\lambda g\}=-\lambda\{f_\lambda g\}$ (left
sesquilinearity),

\item $\{f_\lambda \partial g\}=(\lambda+\partial)\{f_\lambda g\}$ (right
sesquilinearity),

\item $\{f_\lambda gh\}=\{f_\lambda g\}h +\{f_\lambda h\}g$ (left Leibnitz
property),

\item $\{fg_\lambda h\}=\{f_{\lambda+\partial} h\}g+\{g_{\lambda+\partial}
h\} f$ (right Leibnitz property),

\item $\{g_\lambda f\}=-{}_\to\{f_{-\lambda-g}g\}$ (PVA skew-symmetry),

\item $\{f_\lambda\{g_\mu h\}\}-\{g_\mu\{f_\lambda h\}\}=\{\{f_\lambda
g\}_{\lambda+\mu} h\}$ (PVA-Jacobi identity).
\end{enumerate}
\end{definition}

Let us denote
\begin{equation*}
\{f_\lambda g\}=\sum_{s\leq S} C_s(f,g)\lambda^s.
\end{equation*}
The expansion of the bracket in $\lambda$ is bounded by $0\leq s\leq S$ for
\emph{local} PVAs and is not bounded from below for \emph{nonlocal} PVAs.
Using the expansion, the expressions on the RHS of Property 4 are to be
understood as $\{f_{\lambda+\partial}g\}h=\sum C_s(f,g)(\lambda+\partial)^s
h=\sum_s\sum_t C_s\binom{s}{t} \partial^t h\lambda^{s-t}$, while the RHS of
Property 5 reads ${}_\to\{f_{-\lambda-\partial}g\}=\sum_s(-\lambda-%
\partial)^sC_s(f,g)$.

For the case of nonlocal PVAs, it should be noted that the three terms of
PVA-Jacobi identity do not necessarily belong to the same space, because of
the double infinite expansion of the brackets (in terms of $(\lambda,\mu)$, $%
(\mu,\lambda)$ and $(\lambda,\lambda+\mu)$, respectively). A bracket is said
to be \emph{admissible} if all the three terms can be (not uniquely)
expanded as
\begin{equation*}
\{f_\lambda\{g_\mu h\}\}=\sum_{m\leq M}\sum_{n\leq N}\sum_{p\leq
0}a_{m,n,p}\lambda^m\mu^n(\lambda+\mu)^p,
\end{equation*}
and only admissible brackets can define a nonlocal PVA. We denote the space
where the PVA-Jacobi identity of admissible brackets takes values as $%
V_{\lambda,\mu}$. This space can be decomposed by the total degree $d$ in $%
(\lambda,\mu,\lambda+\mu)$; finally, elements of each homogeneous component $%
V^{(d)}_{\lambda,\mu}$ can be \emph{uniquely} expressed in the basis \cite%
{PV2}
\begin{align*}
\lambda^i\mu^{d-i}& & i&\in \mathbb{Z}, \\
\lambda^{d+i}(\lambda+\mu)^{-i}& & i&=\{1,2,\ldots\}.
\end{align*}

The main advantage of PVAs is the existence of a closed and explicit formula
to compute the $\lambda$ bracket of any two elements of $\mathcal{A}$, in
terms of the bracket between the generators of $\mathcal{A}$. Such a formula
is called the \emph{master formula} and reads
\begin{equation}  \label{eq:masterformula}
\{f_\lambda g\}=\sum_{i,j=1}^n\sum_{l\geq 0}\sum_{m\geq 0}\frac{\partial g}{%
\partial u^j_{(m)}}\left(\lambda+\partial\right)^m\{u^i_{\lambda+%
\partial}u^j\}\left(-\lambda-\partial\right)^l\frac{\partial f}{\partial
u^i_{(l)}}
\end{equation}
where $n$ is the number of generators of $\mathcal{A}$ and $u^i_{(l)}$
denotes the $l$-th jet coordinate ($\partial u^i_{(l)}=u^i_{(l+1)}$).

Given a Hamiltonian operator $P^{ij}(\partial)$, the $\lambda$ bracket of
the corresponding PVA is obtained by setting the bracket between the
generators equal the transpose of the symbol of the operator, $\{u^i_\lambda
u^j\}=P^{ji}(\lambda)$. The strategy of our proof consists in obtaining the $%
\lambda$ bracket corresponding to the candidate Hamiltonian operator of
third order, and requiring that it must satisfy the skew-symmetry and
PVA-Jacobi properties. They are equivalent \cite{PV1} to the skew-symmetry
and the Jacobi identity for the Poisson bracket defined by the operator --
hence the conditions that we derive are the conditions for the operator to
be Hamiltonian.

\subsection{The $\protect\lambda$ bracket}

The operator $A$ defined by \eqref{eq:32} corresponds to the $\lambda$
bracket of the form
\begin{equation}  \label{eq:Adef}
\{u^i{}_{\lambda} u^j\}=(\lambda+\partial)\left(g^{ji}\lambda+c^{ji}_l
u^l_x+w^j_lu^l_x(\lambda+\partial)^{-1}w^i_mu^m_x\right)\lambda.
\end{equation}
For convenience, we express the $\lambda$ bracket \eqref{eq:Adef} in
potential coordinates $v^i_x=u^i$ where it takes the form $-\{u^i_\lambda
u^j\}=\{v^i_\lambda v^j\}' = \{v^i_\lambda v^j\}_L + \{v^i_\lambda v^j\}_N$,
with
\begin{equation}  \label{eq:20}
\{v^i_\lambda v^j\}_L= g^{ji}\lambda + c^{ji}_k v^k_{2x},\quad \{v^i_\lambda
v^j\}_N= w^j_lv^l_{2x}(\lambda+\partial)^{-1}w^i_mv^m_{2x},
\end{equation}
and all the functions depending on $v^i_x$ only. We choose to decompose the
bracket in its local part $\{v^i_\lambda v^j\}_L$ and its nonlocal part $%
\{v^i_\lambda v^j\}_N$. The nonlocal part is admissible, being a ratio of
local $\lambda$ brackets \cite{PV2}.

\subsection{The skew-symmetry condition}

The condition of skew-symmetry is equivalent to the conditions
\begin{equation}  \label{eq:21}
g^{ij} = g^{ji},\quad g^{ij}_{,k} = c^{ij}_k + c^{ji}_k.
\end{equation}
Indeed, for the local part we have
\begin{equation*}
\{v^i_\lambda
v^j\}_L=g^{ji}\lambda+c^{ji}_kv^k_{2x}=-{}_\to\{v^j_{-\lambda-\partial}v^i%
\}_L=g^{ij}\lambda+\partial_k g^{ij}v^k_{2x}-c^{ij}_kv^k_{2x}.
\end{equation*}
The two identities are the coefficients of $\lambda$ and of $v^k_{2x}$ in
both sides of the equation.

The nonlocal part is skew-symmetric by construction. This can be shown by
taking the formal series expansion of $(\lambda+\partial)^{-1}$,
substituting the powers of $\lambda$ with $(-\lambda-\partial)$ and taking a
double expansion. However this procedure can be replaced by a much simpler
one: in the nonlocal part of the bracket, $w^j_lv^l_{2x}(\lambda+%
\partial)^{-1}w^i_mv^m_{2x}$, the total derivative in the parenthesis acts
only on terms on its right-hand side. For shorthand, we can write $%
(\lambda+\partial)^{-1}$ as $(\lambda+\partial^{(i)})^{-1}$ where the
superscript means that it acts on $w^i_mv^m_{2x}$. By doing this, the actual
position of the operator $(\lambda+\partial)^{-1}$ becomes irrelevant. Thus,
\begin{equation*}
\{v^i_\lambda v^j\}_N=
\left(\lambda+\partial^{(i)}\right)^{-1}\left(w^j_lv^l_{2x}\right)%
\left(w^i_mv^m_{2x}\right).
\end{equation*}
On the other hand, the total derivative in the definition of skew-symmetry
acts on all the bracket, namely it can be interpreted as $%
\partial^{(i)}+\partial^{(j)}$. This means that
\begin{equation*}
{}_\to\{v^j_{-\lambda-\partial}v^i\}=\left(-\lambda-\partial^{(i)}-%
\partial^{(j)}+\partial^{(j)}\right)^{-1}\left(w^j_lv^l_{2x}\right)%
\left(w^i_mv^m_{2x}\right),
\end{equation*}
from which the skew-symmetry easily follows.

\subsection{The PVA-Jacobi identity}

The PVA-Jacobi identity for the bracket \eqref{eq:Adef} splits into four
parts, when taking into account local and nonlocal parts separately. Let us
adopt the shorthand notation
\begin{gather*}
T^{ijk}_{P,Q}(\lambda,\mu):=\{u^i_\lambda \{u^j_\mu u^k\}_P\}_Q, \\
J^{ijk}(A,B):=T^{ijk}_{A,B}(\lambda,\mu)-T^{jik}_{A,B}(\mu,%
\lambda)+T^{kij}_{A,B}(-\lambda-\mu-\partial,\lambda),
\end{gather*}
for the terms of the PVA-Jacobi identity, where $(P,Q)$ denote different $%
\lambda$ brackets and the last term in $J^{ijk}$ should be understood as $%
{}_\to\{u^k_{-\lambda-\mu-\partial}\{u^i_\lambda u^j\}_A\}_B$, using the
skew-symmetry property of the $\lambda$ bracket. The PVA-Jacobi identity can
then be written as
\begin{equation*}
J^{ijk}(A,A)=0.
\end{equation*}
By the linearity of the bracket, the PVA-Jacobi identity reads
\begin{equation*}
J^{ijk}(A,A)=J^{ijk}(L,L)+J^{ijk}(L,N)+J^{ijk}(N,L)+J^{ijk}(N,N)=0.
\end{equation*}
The computation of $J^{ijk}(L,L)$ is a straightforward application of the
master formula \eqref{eq:masterformula}.

The expressions involving nonlocal terms live in the space $V_{\lambda,\mu}$
whose homogeneous components have the basis $(\lambda^i\mu^{d-i},%
\lambda^{d+j}(\lambda+\mu)^{-j})$ with $i\in \mathbb{Z}$ and $j\in \mathbb{Z}%
_+$. We choose to isolate the nonlocal coefficients of the form $%
P[\mu][(\lambda+\partial)^{-1}w^i_sv^s_{2x}]$, $P[\lambda][(\mu+%
\partial)^{-1}w^j_sv^s_{2x}]$ and $w^k_sv^s_{2x}(\lambda+\mu+%
\partial)^{-1}P[\lambda]$ where $P[\nu]$ are polynomials in the formal
parameter $\nu$ with differential polynomials as coefficients. Expanding $%
(\nu+\partial)^{-1}f$ by $\sum_{k\geq0} (-1)^{k}\nu^{-1-k}\partial^kf$ in
the former two expressions we obtain elements in the subspace whose basis is
$\lambda^i\mu^{d-i}$; doing the same in the latter produces elements in the
subspace $(\lambda+\mu)^{-j}\lambda^{d+j}$. In fact, they give an infinite
number of coefficient, but it is apparent that the vanishing of the terms
for $k=0$ (corresponding to $\lambda^{-1}$, $\mu^{-1}$ and $%
(\lambda+\mu)^{-1}$, respectively) is necessary and sufficient for the
vanishing of all the expansion (because the further elements of the
expansion have a different total degree $d$, so correspond to other elements
of the basis).

To explicitly show how we compute the $\lambda$ bracket for nonlocal PVAs
and how we express the terms of the PVA-Jacobi identity in the basis of $%
V_{\lambda,\mu}$, we demonstrate the full computation for the term $%
T^{jik}_{N,L}(\mu,\lambda)$, i.e. the second summand of $J^{ijk}(N,L)$.
We have
\begin{equation*}
\{v^j{}_{\mu}\{v^i{}_\lambda v^k\}_N\}_L=\{v^j{}_\mu
w^k_mv^m_{2x}(\lambda+\partial)^{-1}w^i_nv^n_{2x}\}_L.
\end{equation*}
By Leibniz's property this expression equals
\begin{equation*}
[(\lambda+\partial)^{-1}w^i_nv^n_{2x}]\{v^j_{\mu}w^k_mv^m_{2x}%
\}_L+w^k_mv^m_{2x}\{v^j_{\mu}(\lambda+\partial)^{-1}w^i_nv^n_{2x}\}_L,
\end{equation*}
(the square brackets remind us that the pseudodifferential operator does not
act outside them). Using sesquilinearity on the second summand we obtain
\begin{align*}
&[(\lambda+\partial)^{-1}w^i_nv^n_{2x}]\{v^j_{\mu}w^k_mv^m_{2x}%
\}_L+w^k_mv^m_{2x}(\lambda+\mu+\partial)^{-1}\{v^j_{\mu}w^i_nv^n_{2x}\}_L
\notag \\
&=[(\lambda+\partial)^{-1}w^i_nv^n_{2x}]\left(w^k_{m,l}v^m_{2x}(\mu+%
\partial)(g^{lj}\mu+c^{lj}_sv^s_{2x})+w^k_l(\mu+\partial)^2(g^{lj}%
\mu+c^{lj}_sv^s_{2x})\right)  \notag \\
&+w^k_mv^m_{2x}(\lambda+\mu+\partial)^{-1}\left(w^i_{n,l}v^n_{2x}(\mu+%
\partial)(g^{lj}\mu+c^{lj}_sv^s_{2x})+w^i_l(\mu+\partial)^2(g^{lj}%
\mu+c^{lj}_sv^s_{2x})\right).  \notag
\end{align*}
The first line of the expression is of the form $[(\lambda+%
\partial)^{-1}w^i_nv^n_{2x}]P[\mu]$, so it is enough to expand $P$. On the
other hand, the second line of the expression is of the form $%
w^k_sv^s_{2x}(\lambda+\mu+\partial)^{-1}P[\mu]$ which cannot be immediately
expanded in the basis of $V_{\lambda,\mu}$. Consider for instance
\begin{equation*}
w^k_mv^m_{2x}(\lambda+\mu+\partial)^{-1}\left(w^i_{n,l}v^n_{2x}(\mu+%
\partial)(g^{lj}\mu+c^{lj}_sv^s_{2x})\right).
\end{equation*}
We proceed by replacing $(\mu+\partial)$ with $(\lambda+\mu+\partial)$, then
moving it to the right of its inverse to make them cancel out:
\begin{multline*}
w^k_mv^m_{2x}(\lambda+\mu+\partial)^{-1}\left(w^i_{n,l}v^n_{2x}(\mu+%
\partial)(g^{lj}\mu+c^{lj}_sv^s_{2x})\right)= \\
w^k_mv^m_{2x}(\lambda+\mu+\partial)^{-1}(\lambda+\mu+\partial)%
\left(w^i_{n,l}v^n_{2x}(g^{lj}\mu+c^{lj}_sv^s_{2x})\right) \\
-w^k_mv^m_{2x}(\lambda+\mu+\partial)^{-1}\left(\lambda
w^i_{n,l}v^n_{2x}(g^{lj}\mu+c^{lj}_sv^s_{2x})+%
\partial(w^i_{n,l}v^n_{2x})(g^{lj}\mu+c^{lj}_sv^s_{2x})\right) \\
=w^k_mv^m_{2x}w^i_{n,l}v^n_{2x}(g^{lj}\mu+c^{lj}_sv^s_{2x})+w^k_mv^m_{2x}(%
\lambda+\mu+\partial)^{-1}\left(-\lambda
w^i_{n,l}v^n_{2x}c^{lj}_sv^s_{2x}-%
\partial(w^i_{n,l}v^n_{2x})c^{lj}_sv^s_{2x}\right) \\
-w^k_mv^m_{2x}(\lambda+\mu+\partial)^{-1}\left(\mu\lambda
w^i_{n,l}v^n_{2x}g^{lj}+\mu\partial(w^i_{n,l}v^n_{2x})g^{lj}\right).
\end{multline*}
The same procedure is then repeated to eliminate $\mu$ in the last
parenthesis.

We proceed similarly to compute all the terms of the nonlocal part of
PVA-Jacobi identity. Finally, we get an overall expression for which we can
collects the coefficients of $\lambda^3$, $\mu^3$, $\lambda^2\mu$, $%
\lambda\mu^2$, $\lambda^2$, $\lambda\mu$, $\mu^2$, $\lambda$, $\mu$, $1$, $%
[(\lambda+\partial)^{-1}w^i_nv^n_{2x}]\mu^3$, $[(\lambda+%
\partial)^{-1}w^i_nv^n_{2x}]\mu^2$, $[(\lambda+\partial)^{-1}w^i_nv^n_{2x}]%
\mu$, $[(\lambda+\partial)^{-1}w^i_nv^n_{2x}]$, $[(\mu+%
\partial)^{-1}w^j_nv^n_{2x}]\lambda^3$, $[(\mu+\partial)^{-1}w^j_nv^n_{2x}]%
\lambda^2$, $[(\mu+\partial)^{-1}w^j_nv^n_{2x}]\lambda$, $%
[(\mu+\partial)^{-1}w^j_nv^n_{2x}]$, $w^k_sv^s_{2x}(\lambda+\mu+%
\partial)^{-1}\lambda^3$, $w^k_sv^s_{2x}(\lambda+\mu+\partial)^{-1}\lambda^2$%
, $w^k_sv^s_{2x}(\lambda+\mu+\partial)^{-1}\lambda$, $w^k_sv^s_{2x}(\lambda+%
\mu+\partial)^{-1}$. The last four terms, of course, mean that the
coefficients are differential polynomials on which $(\lambda+\mu+%
\partial)^{-1}$ acts.

Under the assumption of skew-symmetry of the brackets provided by relations %
\eqref{eq:21}, some of the coefficients in the previous expansion are
equivalent under the interchange of indices. We recall that the PVA-Jacobi
identity is fulfilled if and only if the aforementioned terms vanish for all
$(i,j,k)$; in particular, it is sufficient to consider the terms $%
\lambda^{-1}$, $\mu^{-1}$ and $(\lambda+\mu)^{-1}$ in the expansion of the
nonlocal part, respectively. The independent coefficients that need to be
set to 0 are hence the ones corresponding to $\lambda^3$, $\lambda^2\mu$, $%
\lambda^2$, $\lambda\mu$, $\lambda$, $\mu$, 1, $\lambda^{-1}\mu^3$, $%
\lambda^{-1}\mu^2$, $\lambda^{-1}\mu$, $\lambda^{-1}$. Moreover, each of
these coefficients can be further expanded in the jet variables $v^i_{2x}$, $%
v^i_{3x}$, $v^i_{4x}$ that appear in them, leading to a set of equations for
$g^{ij}$, $c^{ij}_k$ and $w^i_j$.

In particular, we have the following:

\begin{proposition}
Assuming the skew-symmetry of the bracket \eqref{eq:21}, the vanishing of
the following coefficients in the expansion of the PVA-Jacobi identity for
the $\lambda$ bracket \eqref{eq:Adef} is necessary and sufficient for the
vanishing of the whole expression.

\begin{enumerate}
\item Coefficient of $\lambda^3$: $g^{ip}c^{kj}_p+g^{kp}c^{ij}_p$.

\item Coefficient of $\lambda^2\mu$: $%
g^{kp}c^{ij}_p+g^{ip}c^{jk}_p-g^{jp}c^{ki}_p$.

\item Coefficient of $\lambda u^s_{3x}$: after some manipulations involving
the previous identities and their differential consequences we obtain $%
g^{kp}c^{ij}_{p,s}+c^{kj}_pg^{pi}_{,s}+c^{ki}_pc^{pj}_s-c^{ik}_pc^{pj}_s+g^{kp}w^i_pw^j_s.
$

\item Coefficient of $\lambda^{-1}\mu^3$: $-\left(g^{jp}w^k_p+g^{kp}w^j_p%
\right)$.

\item Coefficient of $\lambda^{-1}\mu^2v^s_{2x}$: after some manipulations
involving the previous identities and their differential consequences we
obtain $g^{kp}w^j_{p,s}+g^{jp}_{,s}w^k_p-c^{jk}_pwp_s+c^{kj}_pw^p_s$.
\end{enumerate}

All other coefficients in the expansion are either algebraic/differential
consequences of the former ones, or can be obtained from them by interchange
of indices and the skew-symmetry of the bracket.
\end{proposition}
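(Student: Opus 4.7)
The plan is to exploit the decomposition $\{v^i_\lambda v^j\}' = \{v^i_\lambda v^j\}_L + \{v^i_\lambda v^j\}_N$ and the bilinearity of the PVA-Jacobi identity, so that $J^{ijk}(A,A)$ splits into the four pieces $J^{ijk}(L,L)+J^{ijk}(L,N)+J^{ijk}(N,L)+J^{ijk}(N,N)$. Each piece can be computed independently via the master formula \eqref{eq:masterformula}, after which the result is pushed into the canonical basis of $V_{\lambda,\mu}$ described by \cite{PV2}, namely $\{\lambda^i\mu^{d-i}\}\cup\{\lambda^{d+j}(\lambda+\mu)^{-j}\}$. Since these basis elements are linearly independent and the three pseudo-differential factors $[(\lambda+\partial)^{-1}w^i_nv^n_{2x}]$, $[(\mu+\partial)^{-1}w^j_nv^n_{2x}]$ and $w^k_sv^s_{2x}(\lambda+\mu+\partial)^{-1}$ cannot be matched against polynomial terms, the vanishing of each coefficient is independently required.

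Computing $J^{ijk}(L,L)$ is routine: it produces only polynomial terms in $(\lambda,\mu)$, and its coefficients of $\lambda^3$, $\lambda^2\mu$ and $\lambda\cdot v^s_{3x}$ yield precisely identities (1), (2), (3) of the Proposition, exactly as in the local theory \cite{FPV}. The three remaining pieces carry all the nonlocality, and their reduction to canonical form follows the template already displayed in the paper for $T^{jik}_{N,L}(\mu,\lambda)$: apply Leibniz and sesquilinearity, then eliminate stray factors of $(\mu+\partial)$ or $(\lambda+\partial)$ sitting to the left of the inverses using the substitution $(\mu+\partial)=(\lambda+\mu+\partial)-\lambda-\partial^{(\mathrm{rest})}$, and iterate until the only pseudo-differential factors left are the three canonical ones above.

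Once everything is in canonical form, the coefficients are collected: the polynomial part produces the list $\lambda^3,\lambda^2\mu,\lambda\mu^2,\mu^3,\lambda^2,\lambda\mu,\mu^2,\lambda,\mu,1$ (each further expanded in the jet variables $v^s_{2x},v^s_{3x},v^s_{4x}$), and the three nonlocal factors each produce coefficients of $\lambda^k$ or $\mu^k$ for $k=3,2,1,0$. The key reduction to the five listed identities uses two ingredients: (a) within each pseudo-differential factor, only the lowest order in $\lambda$, $\mu$ or $\lambda+\mu$ need to be checked, since the higher orders sit in distinct homogeneous components of $V_{\lambda,\mu}$ and are automatically equivalent by the overall degree; and (b) the $\mathbb{Z}/3$-cyclic structure of $J^{ijk}$ together with the skew-symmetry relations \eqref{eq:21} identifies the $\mu$-coefficients with $\lambda$-coefficients under relabelling $(i,j,k)$, so the $\mu^3$, $\lambda\mu^2$, $\mu^2$, and $\mu$ lines collapse onto (1)--(3).

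The main obstacle, and where I would expect to spend most of the effort, is the bookkeeping for the jet-variable expansion of the remaining coefficients at order $\lambda$, $\lambda\mu$, $\lambda^2\mu$ (polynomial part) and $\lambda^{-1}\mu^2 v^s_{2x}$, $\lambda^{-1}\mu$, $\lambda^{-1}$ (nonlocal part). Several of these contain the unknowns $g^{ij}$, $c^{ij}_k$, $w^i_j$ together with their first and second derivatives, and the proof that they are algebraic/differential consequences of identities (1)--(5) requires repeatedly differentiating the earlier relations, contracting with the skew-symmetry of $c^{ij}_k$ in the lower indices, and using the antisymmetry of $w_{ij}=g_{is}w^s_j$. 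A computer-algebra verification is essentially unavoidable here, but the structural argument above guarantees that only the five listed coefficients are logically independent, which is the content of the Proposition.
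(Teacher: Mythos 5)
Your overall strategy is the same as the paper's: split the bracket into $\{\cdot_\lambda\cdot\}_L+\{\cdot_\lambda\cdot\}_N$, expand the four pieces of $J^{ijk}(A,A)$ via the master formula, reduce everything to the canonical basis of $V_{\lambda,\mu}$, observe that only the lowest-order term of each nonlocal factor needs to vanish, and dispose of the remaining coefficients by index relabelling and differential consequences. However, there is a concrete error in the step where you claim that the coefficients of $\lambda^3$, $\lambda^2\mu$ and $\lambda v^s_{3x}$ of $J^{ijk}(L,L)$ ``yield precisely identities (1), (2), (3), exactly as in the local theory''. Identity (3) contains the term $g^{kp}w^i_pw^j_s$, which cannot come from $J^{ijk}(L,L)$, since the local--local part does not involve $w$ at all; extracting the coefficient of $\lambda v^s_{3x}$ from $J^{ijk}(L,L)$ alone gives the purely local condition $g^{kp}c^{ij}_{p,s}+c^{kj}_pg^{pi}_{,s}+c^{ki}_pc^{pj}_s-c^{ik}_pc^{pj}_s=0$ of \cite{FPV}, i.e.\ the flatness-type constraint that the nonlocal ansatz is precisely meant to deform into the curvature condition $R_{ijkl}=w_{il}w_{jk}-w_{ik}w_{jl}$. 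Nor can the missing term be recovered from ``the previous identities'', since (1) and (2) are $w$-independent.

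The point you are missing is that $J^{ijk}(L,N)$, $J^{ijk}(N,L)$ and $J^{ijk}(N,N)$ do not feed only the nonlocal coefficients: the very cancellations you invoke, $(\lambda+\mu+\partial)^{-1}(\lambda+\mu+\partial)\to 1$ and their analogues for $(\lambda+\partial)^{-1}$ and $(\mu+\partial)^{-1}$, leave behind purely local residues quadratic in $w$. The paper's worked computation of $T^{jik}_{N,L}(\mu,\lambda)$ exhibits exactly such a residue, $w^k_mv^m_{2x}\,w^i_{n,l}v^n_{2x}\,(g^{lj}\mu+c^{lj}_sv^s_{2x})$, and it is these residues (together with those from $J(L,N)$ and $J(N,N)$), merged with the $J(L,L)$ contribution and simplified using (1), (2), (4), the skew-symmetry relations and their differential consequences, that produce the $g^{kp}w^i_pw^j_s$ term in the coefficient of $\lambda v^s_{3x}$, as well as the $w$-dependent parts of the other low-degree polynomial coefficients. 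Only the top-degree coefficients ($\lambda^3$, $\lambda^2\mu$) are immune, because the nonlocal brackets cannot raise the polynomial degree that far. As written, your bookkeeping would either omit these local residues or count them separately, and you would end up with the wrong (overly restrictive, local) version of identity (3); the jet-variable expansion of the polynomial coefficients must therefore be performed only after the local residues of the nonlocal pieces have been explicitly folded in.
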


This finishes the proof of Theorem \ref{Jacobi}.

\section{Concluding Remarks}

\begin{itemize}
\item We have demonstrated that nonlocal operators  (\ref{eq:32})
arise as Dirac reductions of local operators (\ref{eq:1}) to hyperplanes in
the flat coordinates. It remains to be proved that \textit{every} nonlocal
operator (\ref{eq:32}) can be obtained by this construction.

\item It makes sense to consider Hamiltonian operators with longer `nonlocal
tails' such as
\begin{equation}
A = \partial_x^{}\Big(g^{ij}\partial_x^{} + c^{ij}_k
u^k_x+\sum_{\alpha=1}^Nw^i_{\alpha k}u^k_x\partial_x^{-1}w^j_{\alpha l} u^l_x%
\Big)\partial_x^{},  \label{tail}
\end{equation}
which can be viewed as Dirac reductions of local operators (\ref{eq:1}) to
(special!) linear subspaces of codimension $N$.  Thus, it was observed recently that the oriented associativity
equations (which can be reduced to 6-component systems of hydrodynamic type) possess, in addition to a local first-order Hamiltonian structure
\cite{ps}, a third-order Hamiltonian operator with nonlocal tail of length $N=3$
\cite{Vit}.

 Taking  the 5th equation (\ref{cond1}) for $w$, 
\begin{equation}
w_{ij,l}=c_{ij}^{s}w_{sl},
\end{equation}%
and calculating its compatibility conditions  $(w_{ij,l})_{k}=(w_{ij,k})_{l}$ we obtain a linear system for $w$:
\begin{equation}
(c_{ij,k}^{p}-c_{ij}^{q}c_{qk}^{p})w_{ps}=(c_{ij,s}^{p}-c_{ij}^{q}c_{qs}^{p})w_{pk},
\label{b}
\end{equation}%
recall that the coefficients $c^p_{ij}$ are uniquely determined by the Monge metric $g_{ij}$.
The analysis of this system for different Segre types of Monge metrics suggests that for $n=3$ the length $N$ of the nonlocal tail cannot exceed 1. This implies that 
Theorem \ref{3D} of Section \ref{sec:3} gives a complete list of 3-component nonlocal operators of type (\ref{tail}). As an immediate corollary we obtain that already for $n=3$ {\it not every} Monge metric generates a nonlocal Hamiltonian operator of type (\ref{tail}) (note that for $n=2$ every Monge metric gives rise to a nonlocal operator with tail of length $N\leq 1$, see Section \ref{sec:2}).
\end{itemize}

\section{Acknowledgements}

Matteo Casati was supported by the INdAM-Cofund-2012 Marie Curie fellowship
`MPoisCoho'. Maxim Pavlov was partially supported by the grant of Presidium of
RAS "Fundamental Problems of Nonlinear Dynamics" and by the RFBR grant
18-01-00411-a. Raffaele Vitolo recognises financial support from the
Loughborough University's Institute of Advanced Studies, LMS scheme 2 grant,
INFN by IS-CSN4 \emph{Mathematical Methods of Nonlinear Physics}, GNFM of
Istituto Nazionale di Alta Matematica and Dipartimento di Matematica e Fisica
``E. De Giorgi'' of the Universit\`a del Salento. We thank CIRM (Trento) for
their kind hospitality via the `research in pairs' programme where this project
was completed.


\begin{thebibliography}{99}

\bibitem{BP} \emph{A.V. Balandin, G.V. Potemin}, \newblock On non-degenerate
differential-geometric Poisson brackets of third order, Russian Mathematical
Surveys \textbf{56}, no. 5 (2001) 976-977.

\bibitem{PV1} \emph{A. Barakat, A. De Sole, V.G. Kac}, Poisson vertex
algebras in the theory of Hamiltonian equations, Jpn. J. Math. \textbf{4},
no. 2 (2009), no. 2, 141-252.

\bibitem{PV2} \emph{A. De Sole, V.G. Kac}, Non-local Poisson structures and
applications to the theory of integrable systems, Jpn. J. Math. \textbf{8},
no. 2 (2013) 233-347.

\bibitem{Dolgachev} \emph{\ I.V. Dolgachev}, Classical algebraic geometry. A
modern view, Cambridge University Press, Cambridge, 2012, 639 pp.

\bibitem{Doyle} \emph{P.W. Doyle}, \newblock Differential geometric Poisson
bivectors in one space variable, J. Math. Phys. \textbf{34}, no. 4 (1993)
1314-1338.

\bibitem{DN2} \emph{B.A. Dubrovin and S.P. Novikov}, Poisson brackets of
hydrodynamic type, Soviet Math. Dokl. \textbf{30}, no. 3 (1984) 651-2654.

\bibitem{Dub} \emph{B.A. Dubrovin}, Geometry of 2D topological field
theories, Lecture Notes in Mathematics, V.1620, Berlin, Springer, 120-348.

\bibitem{Fer95} \emph{E.V. Ferapontov}, Nonlocal Hamiltonian operators of
hydrodynamic type: differential geometry and applications, Amer. Math. Soc.
Transl. (2) \textbf{170} (1995) 33-58.

\bibitem{FN} \emph{E.V. Ferapontov, C.A.P. Galvao, O. Mokhov, Y. Nutku}, %
\newblock Bi-Hamiltonian structure of equations of associativity in 2-d
topological field theory, Comm. Math. Phys. \textbf{186 }(1997) 649-669.

\bibitem{fmoss} \emph{E.V. Ferapontov, J. Moss}, Linearly degenerate PDEs
and quadratic line complexes, Comm. Anal. Geom. \textbf{23}, no.1 (2015)
91-127.

\bibitem{FPV} \emph{E.V. Ferapontov, M. V. Pavlov, R.F. Vitolo},
Projective-geometric aspects of homogeneous third-order Hamiltonian
operators, J. Geom. Phys. \textbf{85} (2014) 16-28.

\bibitem{FPV1} \emph{E.V. Ferapontov, M.V. Pavlov, R.F. Vitolo}, Towards the
classification of homogeneous third-order Hamiltonian operators, IMRN no.
\textbf{22} (2016) pp. 6829-6855; doi:10.1093/imrn/rnv369.

\bibitem{FPV2} \emph{Ferapontov, M.V. Pavlov, R.F. Vitolo}, Systems of
conservation laws with third-order Hamiltonian structures, Lett. Math. Phys.
(2017), DOI: 10.1007/s11005-018-1054-3; arXiv:1703.06173.

\bibitem{GD} I.M. Gelfand, I. Ja. Dorfman, Hamiltonian operators and
algebraic structures associated with them, Funktsional. Anal. i Prilozhen.
\textbf{13}, no. 4 (1979) 13-30, 96.

\bibitem{Jess} \emph{C.M. Jessop}, A treatise on the line complex, Cambridge
University Press 1903.

\bibitem{KN1} \emph{J. Kalayci, Y. Nutku}, Bi-Hamiltonian structure of a
WDVV equation in 2d topological field theory, Phys.\ Lett.\ A \textbf{227}
(1997), 177--182.

\bibitem{KN2} \emph{J. Kalayci, Y. Nutku}, Alternative bi-Hamiltonian
structures for WDVV equations of associativity, J. Phys.\ A: Math.\ Gen.\
\textbf{31} (1998) 723-734.

\bibitem{MMZ} \emph{W.-X. Ma, S. Manukure, H.-C. Zheng,} A counterpart of
the Wadati-Konno-Ichikawa soliton hierarchy associated with ${so}(3,R)$, Z.
Naturforsch. \textbf{69}a (2014) 411-419; %
\url{http://arxiv.org/abs/1405.1089}.

\bibitem{OM98} \emph{O.I. Mokhov}, \newblock Symplectic and Poisson
structures on loop spaces of smooth manifolds, and integrable systems,
Russian Math. Surveys \textbf{53}, no. 3 (1998) 515-622.

\bibitem{pv} \emph{M.V. Pavlov, R. Vitolo,} \newblock On the bi-Hamiltonian
Geometry of WDVV Equations, Lett. Math. Phys. \textbf{105}, no. 8 (2015)
1135-1163.

\bibitem{Vit} \emph{M.V. Pavlov, R.F. Vitolo}, On the bi-Hamiltonian
Geometry of Oriented WDVV Equations, to appear.

\bibitem{ps} \emph{M.V. Pavlov, A. Sergyeyev}, \newblock Oriented
associativity equations and symmetry consistent conjugate curvilinear
coordinate nets, J. Geom. Phys. \textbf{85} (2014), 46-59.

\bibitem{GP87} \emph{G.V. Potemin}, \newblock On Poisson brackets of
differential-geometric type, Soviet Math. Dokl. \textbf{33} (1986) 30-33.

\bibitem{GP97} \emph{G.V. Potemin}, \newblock On third-order Poisson
brackets of differential geometry, Russ. Math. Surv. \textbf{52} (1997)
617-618.

\bibitem{GP91} \emph{G.V. Potemin}, \newblock Some aspects of differential
geometry and algebraic geometry in the theory of solitons. PhD Thesis,
Moscow, Moscow State University (1991) 99 pages.

\bibitem{WKI} \emph{M. Wadati, K. Konno, Y. H. Ichikawa}, New integrable
nonlinear evolution equations, J. Phys. Soc. Japan \textbf{47}, no. 5 (1979)
1698-1700.
\end{thebibliography}
\end{document}